\def\@email#1#2{%
 \endgroup
 \patchcmd{\titleblock@produce}
  {\frontmatter@RRAPformat}
  {\frontmatter@RRAPformat{\produce@RRAP{*#1\href{mailto:#2}{#2}}}\frontmatter@RRAPformat}
  {}{}
}%
\newtheorem{theorem}{Theorem}
\newtheorem{corollary}[theorem]{Corollary}
\newtheorem{definition}[theorem]{Definition}
\newtheorem{proposition}[theorem]{Proposition}
\newtheorem{remark}[theorem]{Remark}
\newenvironment{proof}[1][Proof]{\noindent\textbf{#1.} }{\ \rule{0.5em}{0.5em}}
\begin{document}


\title{An Approach to Anti-Wick Ordering of Bosonic Fields} 



\author{John Gough}
\affiliation{Aberystwyth University, SY23 3BZ, Wales, United Kingdom}

\author{H. Yamashita}

\affiliation{Aichi-Gakuin University
Nissin, Aichi,
Japan}

\email{jug@aber.ac.uk}
\email{yamasita@dpc.aichi-gakuin.ac.jp}
\date{\today}

\begin{abstract}
We present a new technique for putting general boson fields into ant-Wick ordered form. The anti-Wick map associates an operator with a given function of complex variables, and we show that it may be realized as composition of a mapping to a commutative sub-algebra of a doubled-up boson algebra followed by a partial conditional expectation onto one of the factors.
\end{abstract}

\pacs{}

\maketitle 

\section{Introduction}
Let $\hat A$ be a Bose annihilation operator so that $[\hat A , \hat A^\ast ] = \hat I$. The anti-Wick ordering (also known as anti-normal, or Berezin, or Toeplitz ordering) of a function $f(\alpha^\ast , \alpha ) = \sum_{nm} f_{nm} (\alpha^\ast)^n \alpha^m $ of a complex variable $\alpha$ is defined as $\mathscr{A} (f) = \sum_{nm} f_{nm} \hat A^m (\hat A ^\ast)^n$. Anti-normal ordering is widely used in quantum field theory \cite{Berezin} and quantum optics \cite{Louisell}, and is essential for the Sudarshan-Glauber P-representation \cite{Sudarshan,Glauber}.

In this paper, we wish to elaborate on a remarkable procedure for realizing the anti-Wick ordered quantization which was recently discovered by one of the authors \cite{Yamashita22}. 
The starting point is the observation that if we introduce a second Bose annihilation operator $\hat B$ so that $ [\hat B , \hat B^\ast ] = \hat I$ with all other commutators between $\hat A , \hat A^\ast , \hat B , \hat B^\ast$ vanishing. We observe that $\hat C = \hat A + \hat B^\ast$ is a normal operator: $[ \hat C ,\hat C^\ast]= 0 $. 
From this, we may introduce the mapping $\varphi :f(\alpha ^\ast , \alpha )\mapsto f(\hat  C^\ast , \hat C)$ which we note is a mapping between \textit{commutative} algebras. This followed by taking the partial vacuum expectation $\mathcal{E}$ with respect to the $\hat B$'s. This results in the anti-Wick quantization:
\begin{eqnarray}
    \mathscr{A} \equiv \mathcal{E} \circ \varphi .
\end{eqnarray}

To see this, let us take $\mathcal{H}_A$ and $\mathcal{H}_B$ to be the Hilbert spaces for the two modes so that $\hat C = \hat A \otimes \hat I + \hat I \otimes \hat B^\ast$ on $\mathcal{H}_A \otimes \mathcal{H}_B$ with $[\hat C , \hat C^\ast ] = \hat I$. The partial vacuum expectation $\mathcal{E}$ taking us from operators on $\mathcal{H}_A \otimes \mathcal{H}_B$ to operators on $\mathcal{H}_A $ is extended by linearly from the basic feature $: T_A \otimes T_B \to \langle \Omega |T_B \,\Omega \, \rangle \, T_A$ where $\Omega$ is the vacuum (i.e., $\hat B \, \Omega =0$). Then, $\mathcal{E} f (\hat C^\ast , \hat C^\ast ) = \mathscr{A} (f)$. To see, let us set $f(\alpha^\ast , \alpha )= (\alpha ^\ast )^n \alpha^m$ then
\begin{eqnarray}
\mathcal{E} :
(\hat A^\ast \otimes \hat I + \hat I \otimes \hat B^\ast)^n
(\hat A \otimes \hat I + \hat I \otimes \hat B)^m
     \to \hat A^m \, (\hat A^\ast )^n \equiv \mathscr{A} (f).
\end{eqnarray}
The construction first replaces the classical variables with commuting operators (which may be reordered without restriction!) only to have the partial vacuum expectation kill off everything except the Wick-ordered terms in $\hat B, \hat B^\ast$: the surviving terms then being the anti-Wick ordered terms in $\hat A, \hat A^\ast$.
This somewhat surprising property stems from the fact that the map $\mathcal{E}$ actually takes us from a commutative algebra into a non-commutative one.

\begin{figure}[h!]
    \centering
    \includegraphics[width=0.40\linewidth]{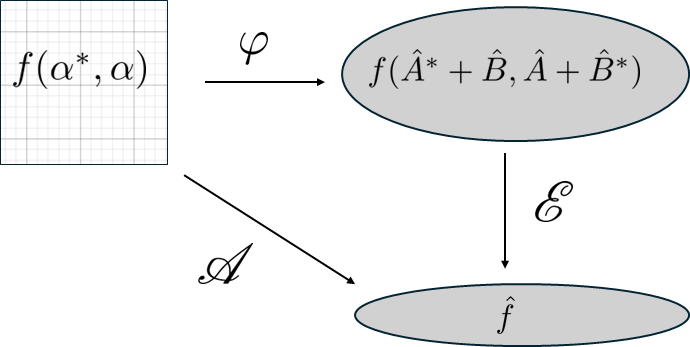}
    \caption{The anti-Wick mapping taking a function $f(\alpha^\ast , \alpha)$ to an operator $\hat f = \mathscr{A} (f)$ based on $\alpha \to \hat A, \alpha^\ast \to \hat A^\ast$. We obtain $\mathscr{A}$ as a composition of an embedding $\varphi$ of commutative functions in a dilation followed by a partial vacuum expectation $\mathscr{E}$.}
    \label{fig:enter-label}
\end{figure}

The goal of this paper is to extend the technique to a general setting.
The outline of this paper is as follows. After establishing notation, we recall in Section \ref{sec:Complex_Wave} the complex-wave representation for Bose systems. We introduce the Hilbert space $L^2 (\mathbb{C}, \mathbb{P})$ of functions on the complex plane that are square-integrable with respect to a Gaussian measure $\mathbb{P}$. In particular, the operators $\hat A$ and $\hat B$ are $\frac{\partial}{\partial \alpha^\ast}$ and $\frac{\partial}{\partial \alpha}$, respectively, and we show that $\hat C$ corresponds to multiplication by $\alpha$. We give a reproducing kernel Hilbert space construction for the kernel $K(\alpha^\ast, \beta)= e^{\alpha^\ast \beta}$ where the reproducing kernel Hilbert space is the anti-holomorphic functions in $L^2 (\mathbb{C}, \mathbb{P})$. This leads to the Bargmann-Segal isomorphism with Fock space and we indicate how to extend this from functions on the complex plane to functions over a separable Hilbert space. 
In Section \ref{sec:anti-Wick}, we give the general formulation of anti-Wick ordering for bosonic fields over Fock space.

\bigskip

\textbf{Notation}
We shall work with Hilbert spaces over real and complex spaces with Gaussian measures. To begin with, we consider $L^2 (\mathbb{R} , \gamma )$ where $\gamma (dx) = \frac{e^{-x^2/2}}{\sqrt{2 \pi}} dx$ is the standardized Gaussian measure. Here we may defined annihilation and creation operators
\begin{eqnarray}
    \hat a \equiv \frac{\partial }{\partial x}, \quad
    \hat a^\ast \equiv x - \frac{\partial }{\partial x}
    \label{eq:aandadag}
\end{eqnarray}
and these satisfy the canonical commutation relations $[\hat a , \hat a^\ast]= \hat I$. The vacuum vector is given by $\Omega (x) = 1$ and we have $ \hat a \, \Omega =0$. We may set $\hat a = \frac{1}{\sqrt{2}} (\hat q +i \hat p )$ where we have the self-adjoint operators 
\begin{eqnarray}
    \hat q = \frac{1}{\sqrt{2}} (\hat a + \hat a^\ast ) \equiv \frac{1}{\sqrt{2}} x,
    \qquad
    \hat p =
\frac{1}{i} \frac{1}{\sqrt{2}}  ( \hat a - \hat a^\ast )\equiv -i ( \sqrt{2} \frac{\partial }{\partial x} -\frac{1}{\sqrt{2}} x)
\label{eq:qandp}
\end{eqnarray}
They satisfy $[\hat q, \hat p] =i \hat I$.

The number operator $\hat N = \hat a^\ast \hat a$ has a complete set of orthonormal eigenvectors $ |e_n \rangle$ : $(\hat N  - n ) \, | e_n \rangle =0$ for $n=0,1,2, \cdots$. We also note that the vector $ e_0 \equiv 1$ is the vacuum and is annihilated by $\hat a$; also that $\hat a \, |e_n   \rangle= \sqrt{n} \, | e_{n-1}  \rangle$ for $n \ge 1$.

\bigskip

We shall consider the Hilbert space over the complex plane. In the following, we shall take $\alpha = \frac{1}{\sqrt{2}}(x+iy)$ to be a complex number with $x,y$ (referred to as the quadratures) being real. We consider functions on the complex plane $\mathbb{C}$ of the form $f(\alpha^\ast , \alpha) = \sum_{j,k\ge0} f_{jk} (\alpha ^\ast )^j \alpha^k$. We may treat the variables $\alpha$ and $\alpha^\ast$ as independent variables if desired, and here we mean $f(\beta^\ast , \alpha) = \sum_{j,k\ge0} f_{jk} (\beta ^\ast )^j \alpha^k$.
It is also understood as a function $F=F(x,y)$ of the two quadratures, however, we typically write it as $f=f(\alpha^\ast , \alpha)\equiv F\big( \frac{1}{\sqrt{2}}(\alpha + \alpha^\ast) ,\frac{1}{i\sqrt{2}}(\alpha - \alpha^\ast )\big)$. If the dependence is only on $\frac{1}{\sqrt{2}}(x+iy)$ then we say $f$ is holomorphic and write $f=f( \alpha )$; likewise, if the dependence is only on $\frac{1}{\sqrt{2}}(x-iy)$ we say it is anti-holomorphic and write $f=f(\alpha^\ast )$. General functions can be written as 

\section{The Complex Wave Representation}
\label{sec:Complex_Wave}
A measure $\mathbb{P}$ on the complex plane is given by
\begin{eqnarray}
    \mathbb{P} [ d\alpha] = \mathbb{P} [d \alpha^\ast]
    = e^{-|\alpha |^2}\frac{d\alpha^2 }{\pi}
    \triangleq e^{-(x^2+y^2)/2}\frac{dxdy}{2 \pi}.
\end{eqnarray}

\begin{definition}
    The Hilbert space $L^2( \mathbb{C}, \mathbb{P})$ is defined to be the set of functions $f=f(\alpha , \alpha^\ast )$ satisfying
    \begin{eqnarray}
        \int_{\mathbb{C}} |f (\alpha , \alpha^\ast ) |^2 \, \mathbb{P} [ d \alpha] < \infty.
    \end{eqnarray}
     The inner product is then given by
     \begin{eqnarray}
         \langle f|g \rangle = \int_{\mathbb{C}} 
         f(\alpha , \alpha^\ast )^\ast g(\alpha , \alpha^\ast ) \,
         \mathbb{P} [ d \alpha ] \equiv  
         \int_{ \mathbb{R} \times \mathbb{R}}
         F(x,y)^\ast G(x,y) \,  \frac{e^{-x^2/2}dx}
         {\sqrt{2\pi}} \frac{e^{-y^2/2}dy}
         {\sqrt{2\pi}}.
     \end{eqnarray}
     The subspaces of holomorphic and anti-holomorphic functions are denoted as $L^2_{\mathrm{hol.}}( \mathbb{C}, \mathbb{P})$ and $L^2_{\mathrm{anti-hol.}}( \mathbb{C}, \mathbb{P})$, respectively.
\end{definition}

We note that constant functions form a one-dimensional subspace, and that this is the intersection of $L^2_{\text{hol.}}( \mathbb{C}, \mathbb{P})$ and $L^2_{\text{anti-hol.}}( \mathbb{C}, \mathbb{P})$.

\begin{definition}
    We define the following operators on $L^2( \mathbb{C}, \mathbb{P})$:
    \begin{eqnarray}
        \hat A = \frac{\partial}{\partial \alpha^\ast}, \qquad \hat A^\ast = \alpha^\ast - \frac{\partial}{\partial \alpha} , \qquad
        \hat B = \frac{\partial}{\partial \alpha}, \qquad \hat B^\ast = \alpha - \frac{\partial}{\partial \alpha^\ast } .
    \end{eqnarray}
\end{definition}

\begin{proposition}
    The pairs $(\hat A, \hat A^\ast)$ and $(\hat B , \hat B^\ast )$ are mutually adjoint and satisfy
    \begin{gather}
        [\hat A , \hat A^\ast] = \hat I = [ \hat B , \hat B^\ast ] ,\\
        [\hat A , \hat B] = [ \hat A , \hat B^\ast]= 
        [\hat A^\ast , \hat B] = [ \hat A^\ast , \hat B]=0.
    \end{gather}
    \end{proposition}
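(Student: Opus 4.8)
The plan is to treat the two assertions — the commutation relations and the mutual adjointness — separately, since they call on quite different tools. For the commutators I would work entirely within the Wirtinger calculus, regarding $\alpha$ and $\alpha^\ast$ as independent variables so that $\partial_\alpha\alpha = \partial_{\alpha^\ast}\alpha^\ast = 1$ while $\partial_\alpha\alpha^\ast = \partial_{\alpha^\ast}\alpha = 0$, and using equality of mixed second derivatives to commute $\partial_\alpha$ with $\partial_{\alpha^\ast}$. The two nontrivial canonical relations then reduce to $[\partial_{\alpha^\ast},\alpha^\ast] = \hat I$ and $[\partial_\alpha,\alpha] = \hat I$: inserting the definitions, $[\hat A,\hat A^\ast] = [\partial_{\alpha^\ast},\,\alpha^\ast - \partial_\alpha]$ collapses to $[\partial_{\alpha^\ast},\alpha^\ast] = \hat I$ because $\partial_{\alpha^\ast}$ commutes with $\partial_\alpha$, and symmetrically for $[\hat B,\hat B^\ast]$. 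For the four cross-commutators I would expand each into elementary pieces: every term is either a pair of commuting derivatives, a pair of commuting multiplication operators, or a derivative hitting the \emph{wrong} variable (so e.g. $[\partial_{\alpha^\ast},\alpha] = 0$). The only case requiring a genuine cancellation is $[\hat A^\ast,\hat B^\ast]$, where the surviving contributions $-[\alpha^\ast,\partial_{\alpha^\ast}] = \hat I$ and $-[\partial_\alpha,\alpha] = -\hat I$ annihilate one another.

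The adjointness is the part that actually engages the measure. Here I would verify $\langle \hat A f \,|\, g\rangle = \langle f \,|\, \hat A^\ast g\rangle$ by integration by parts against the Gaussian weight. Writing the inner product as $\tfrac{1}{\pi}\int_{\mathbb C} f^\ast\, g\, e^{-\alpha^\ast\alpha}\, d\alpha^2$ and using the conjugation rule $(\partial_{\alpha^\ast} f)^\ast = \partial_\alpha(f^\ast)$, the left-hand side becomes $\tfrac{1}{\pi}\int (\partial_\alpha f^\ast)\, g\, e^{-\alpha^\ast\alpha}\, d\alpha^2$. The decisive observation is that the weight obeys $\partial_\alpha e^{-\alpha^\ast\alpha} = -\alpha^\ast\, e^{-\alpha^\ast\alpha}$, so moving $\partial_\alpha$ onto $g\,e^{-\alpha^\ast\alpha}$ produces precisely $\alpha^\ast g - \partial_\alpha g = \hat A^\ast g$ once the weight is factored out again. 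The statement for $(\hat B,\hat B^\ast)$ is identical under the interchange $\alpha \leftrightarrow \alpha^\ast$: one integrates by parts in $\alpha^\ast$ and uses $\partial_{\alpha^\ast} e^{-\alpha^\ast\alpha} = -\alpha\, e^{-\alpha^\ast\alpha}$.

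The main obstacle I anticipate is not the formal manipulation but its justification: integration by parts is only legitimate on a domain where the boundary terms at infinity vanish and where all four operators are unambiguously defined. I would handle this by establishing every identity first on the dense subspace of polynomials in $\alpha,\alpha^\ast$ — equivalently, finite combinations of the number-operator eigenvectors — where differentiation is unproblematic and the Gaussian factor kills the boundary contributions in the integration by parts. On this common invariant core the formal identities above hold verbatim, yielding simultaneously the canonical commutation relations and the adjoint relations $\langle \hat A f\,|\,g\rangle = \langle f\,|\,\hat A^\ast g\rangle$ and $\langle \hat B f\,|\,g\rangle = \langle f\,|\,\hat B^\ast g\rangle$. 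The operators are then taken to be the closures of these restrictions, and the relations extend to their natural domains by continuity.
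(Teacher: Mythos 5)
Your proposal is correct and follows essentially the same route as the paper: the adjointness is obtained by integrating by parts against the Gaussian weight (the paper does this in the real quadratures $x,y$, whereas you stay in the Wirtinger variables $\alpha,\alpha^\ast$ and use $\partial_\alpha e^{-\alpha^\ast\alpha}=-\alpha^\ast e^{-\alpha^\ast\alpha}$ directly, which is the same computation in different coordinates), and the commutation relations reduce to the elementary Wirtinger identities, which the paper simply declares ``easily established.'' Your additional care about working on the polynomial core and taking closures goes beyond what the paper records, and your explicit treatment of the cancellation in $[\hat A^\ast,\hat B^\ast]$ correctly covers the case the paper's statement garbles by listing $[\hat A^\ast,\hat B]$ twice.
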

    
\begin{proof}
    In terms of the quadratures, we have $\frac{\partial}{\partial \alpha^\ast} = \frac{1}{\sqrt{2}} \big( \frac{\partial}{\partial x}+ i\frac{\partial}{\partial y}\big) $ and we compute
    \begin{eqnarray*}
        \langle \hat A f | g \rangle &=&
        \int_{\mathbb{R}^2}\bigg( \frac{1}{\sqrt{2}} \frac{\partial F^\ast}{\partial x}- i \frac{1}{\sqrt{2}} \frac{\partial F^\ast}{\partial y}\bigg)  G(x,y) \,
        e^{-(x^2 +y^2 )/2 }\frac{dxdy}{2 \pi} \nonumber \\
        &=&
        \int_{\mathbb{R}^2} F^\ast 
        \bigg\{ - \frac{1}{\sqrt{2}} \bigg( \frac{\partial G}{\partial x}- i  \frac{\partial G}{\partial y}\bigg) 
        + \frac{1}{\sqrt{2}} (x-iy) \, G(x,y) \bigg\}
        \,
        e^{-(x^2 +y^2 )/2 }\frac{dxdy}{2 \pi} 
        =
        \int_{\mathbb{C}} f^\ast \bigg( -\frac{\partial}{\partial \alpha} + \alpha^\ast \bigg) g \, \mathbb{P} [d \alpha ]
        .
    \end{eqnarray*}
    The case $\frac{\partial}{\partial \alpha} = \frac{1}{\sqrt{2}} \big( \frac{\partial}{\partial x}- i\frac{\partial}{\partial y}\big) $ is analogous. The commutation relations are easily established.
\end{proof}

It follows that we have two independent (commuting!) boson creation and annihilation operators on the Hilbert space $L^2( \mathbb{C}, \mathbb{P})$. Moreover, the holomorphic and anti-holomorphic spaces arise naturally as the subspaces annihilated:
\begin{eqnarray}
    L^2_{\text{hol.}}( \mathbb{C}, \mathbb{P}) &\equiv& \{ f \in  L^2( \mathbb{C}, \mathbb{P}) : \hat A f=0 \} , \\
    L^2_{\text{anti-hol.}}( \mathbb{C}, \mathbb{P}) &\equiv& \{ f \in  L^2( \mathbb{C}, \mathbb{P}) : \hat B f=0 \} .
\end{eqnarray}

\begin{proposition}
    The orthogonal projection $\mathcal{P}$ from
    $L^2( \mathbb{C}, \mathbb{P})$ onto $L^2_{\text{anti-hol.}}( \mathbb{C}, \mathbb{P})$ is extended by linearity from
    \begin{eqnarray}
        \mathcal{P} : (\alpha ^\ast)^n \alpha^m =
        \left\{
        \begin{array}{cc}
             \frac{n!}{(n-m)! } (\alpha^\ast )^{n-m}, &  n \ge m;\\
             0 , &  \text{otherwise}.
        \end{array}
        \right.
    \end{eqnarray}
\end{proposition}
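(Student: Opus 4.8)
The plan is to project directly onto an orthonormal basis of the anti-holomorphic subspace. Since $\hat B = \partial/\partial\alpha$ annihilates precisely those $L^2$ functions that are independent of $\alpha$, the space $L^2_{\text{anti-hol.}}(\mathbb{C},\mathbb{P})$ is the closed span of the monomials $(\alpha^\ast)^k$, $k\ge 0$. First I would compute their pairwise inner products. Passing to polar coordinates $\alpha=re^{i\theta}$ in
\begin{eqnarray}
  \langle (\alpha^\ast)^j | (\alpha^\ast)^k\rangle
  = \int_{\mathbb{C}} \alpha^{j}(\alpha^\ast)^{k}\, \mathbb{P}[d\alpha],
\end{eqnarray}
the angular integral $\int_0^{2\pi} e^{i(j-k)\theta}\,d\theta$ forces $j=k$, while the radial integral $2\int_0^\infty r^{2k+1}e^{-r^2}\,dr$ is a Gamma integral equal to $k!$. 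Hence $\langle (\alpha^\ast)^j | (\alpha^\ast)^k\rangle = \delta_{jk}\,k!$, and $\{(\alpha^\ast)^k/\sqrt{k!}\}_{k\ge0}$ is an orthonormal basis of the subspace.

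Next I would compute the overlap of a general monomial $(\alpha^\ast)^n\alpha^m$ with each basis vector. Using the inner product $\langle f|g\rangle=\int f^\ast g\,\mathbb{P}[d\alpha]$ and exactly the same Gaussian integral,
\begin{eqnarray}
  \langle (\alpha^\ast)^k | (\alpha^\ast)^n\alpha^m\rangle
  = \int_{\mathbb{C}} \alpha^{k+m}(\alpha^\ast)^n\, \mathbb{P}[d\alpha]
  = \delta_{k+m,\,n}\; n!,
\end{eqnarray}
so the overlap is nonzero for the single index $k=n-m$, and only when $n\ge m$. Assembling the projection against the orthonormal basis,
\begin{eqnarray}
  \mathcal{P}\big[(\alpha^\ast)^n\alpha^m\big]
  = \sum_{k\ge 0} \frac{\langle (\alpha^\ast)^k | (\alpha^\ast)^n\alpha^m\rangle}{\|(\alpha^\ast)^k\|^2}\,(\alpha^\ast)^k,
\end{eqnarray}
the sum collapses to its single surviving term $\frac{n!}{(n-m)!}(\alpha^\ast)^{n-m}$ when $n\ge m$, and to $0$ otherwise, which is the asserted formula.

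The extension by linearity is justified because the polynomials in $\alpha,\alpha^\ast$ form a dense subspace of $L^2(\mathbb{C},\mathbb{P})$ and an orthogonal projection is bounded, so its values on the monomials determine it everywhere. I do not expect a genuine obstacle here; the one point demanding care is the \emph{completeness} of $\{(\alpha^\ast)^k\}$ inside $L^2_{\text{anti-hol.}}(\mathbb{C},\mathbb{P})$, which guarantees that the displayed series is truly the orthogonal projection rather than a partial one. This follows from the fact that an anti-holomorphic $L^2$ function admits a convergent power-series expansion in $\alpha^\ast$, so no anti-holomorphic direction is missed by the monomial family. The remaining content is purely the bookkeeping of which powers survive the angular integration.
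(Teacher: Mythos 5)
Your proposal is correct and follows essentially the same route as the paper: expand against the orthonormal basis $\{(\alpha^\ast)^k/\sqrt{k!}\}$ of the anti-holomorphic subspace, evaluate $\langle(\alpha^\ast)^k|(\alpha^\ast)^n\alpha^m\rangle=\delta_{k+m,n}\,n!$ in polar coordinates, and read off the single surviving term of the projection. The only difference is that you explicitly verify the orthonormality and comment on completeness, which the paper simply asserts; your bookkeeping of the factor of $2$ in the radial Gamma integral is also the cleaner version.
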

\begin{proof}
    The functions $\frac{1}{\sqrt{k!}} (\alpha^\ast)^k$, $(k = 0,1,2,\cdots )$, form a complete orthonormal basis for $L^2_{\text{anti-hol.}}( \mathbb{C}, \mathbb{P})$ and, transferring to polar form $\alpha = re^{i \theta}$, we note that
    \begin{eqnarray*}
        \langle (\alpha^\ast)^k | (\alpha^\ast )^n \alpha^m \rangle
        = \int_{\mathbb{C}} (\alpha^\ast )^n \alpha^{m+k} \, \mathbb{P}[d \alpha ] = \int_0^\infty r dr \, e^{-r^2}r^{n+m+k}\int_0^{2\pi}\frac{d\theta}{2\pi} e^{i(m+k-n) \theta} .
    \end{eqnarray*}
The $\theta$ integral leads to $\delta_{m+k,n}$ while we note that $\int_0^\infty dr \, e^{-r^2}r^{2p+1}= p!$ for positive integers $p$. Therefore, $ \langle (\alpha^\ast)^k | (\alpha^\ast )^n \alpha^m \rangle  =\delta_{m+k,n} n!$.
We then have $\mathcal{P} = \sum_k \frac{1}{k!}| (\alpha^\ast)^k \rangle \langle (\alpha^\ast)^k| $ and the result follows.
\end{proof}

\bigskip

We now come to the main observation.

\begin{proposition}
    The operation of multiplication by $\alpha$ on $L^2( \mathbb{C}, \mathbb{P})$ is given by $\hat C = \hat A+ \hat B^\ast$ while multiplication by $\alpha^\ast $ is given by $\hat C ^\ast = \hat A ^\ast + \hat B$. The operators $\hat C$ and $\hat C^\ast$ are normal.
\end{proposition}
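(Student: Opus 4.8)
The plan is to verify both assertions by direct substitution of the explicit differential expressions for $\hat A, \hat A^\ast, \hat B, \hat B^\ast$ from the preceding Definition, and then to read off normality as an immediate consequence. First I would compute $\hat A + \hat B^\ast$ by inserting $\hat A = \frac{\partial}{\partial \alpha^\ast}$ and $\hat B^\ast = \alpha - \frac{\partial}{\partial \alpha^\ast}$. The two derivative terms cancel identically, leaving
\begin{eqnarray*}
\hat C = \hat A + \hat B^\ast = \frac{\partial}{\partial \alpha^\ast} + \alpha - \frac{\partial}{\partial \alpha^\ast} = \alpha,
\end{eqnarray*}
which is precisely the operator of multiplication by $\alpha$. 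The identical manipulation with $\hat A^\ast = \alpha^\ast - \frac{\partial}{\partial \alpha}$ and $\hat B = \frac{\partial}{\partial \alpha}$ yields $\hat C^\ast = \hat A^\ast + \hat B = \alpha^\ast$, the operator of multiplication by $\alpha^\ast$. This is internally consistent with the adjoint pairings established earlier, since $(\hat A + \hat B^\ast)^\ast = \hat A^\ast + \hat B$, and multiplication by $\alpha^\ast$ is indeed the adjoint of multiplication by $\alpha$ in $L^2(\mathbb{C}, \mathbb{P})$.

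For normality I would present two routes. The most transparent is to note that, having identified $\hat C$ and $\hat C^\ast$ as multiplication by $\alpha$ and $\alpha^\ast$ respectively, their commutator vanishes simply because pointwise multiplication of functions on $\mathbb{C}$ is commutative: $(\alpha\alpha^\ast - \alpha^\ast\alpha)f = 0$ for every $f$, so $[\hat C, \hat C^\ast] = 0$. The second, more structural, route recomputes the commutator purely from the canonical relations of the previous Proposition,
\begin{eqnarray*}
[\hat C, \hat C^\ast] = [\hat A + \hat B^\ast, \hat A^\ast + \hat B] = [\hat A, \hat A^\ast] + [\hat B^\ast, \hat B] = \hat I - \hat I = 0,
\end{eqnarray*}
where the two cross-terms drop out by the vanishing of all $A$--$B$ commutators, and the surviving pieces cancel because $[\hat B^\ast, \hat B] = -[\hat B, \hat B^\ast] = -\hat I$. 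I find the second route worth including because it makes plain \emph{why} the superficially surprising fact holds—that a sum of an annihilation operator and a creation operator is commutative—namely that the creation-operator contribution from the $B$-mode exactly offsets the annihilation contribution from the $A$-mode.

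I do not anticipate any genuine obstacle: the entire content is a cancellation identity. The only point deserving mild care is the functional-analytic one, since $\hat C$ is unbounded. I would therefore read all of the above on a common dense invariant domain, most conveniently the linear span of the monomials $(\alpha^\ast)^j \alpha^k$, on which all four operators act as honest polynomials and every commutator manipulation is literal. Normality in the full operator-theoretic sense then follows at once, as multiplication by the measurable function $\alpha$ is a standard normal multiplication operator on the $L^2$ space, with $\hat C^\ast$ its genuine Hilbert-space adjoint.
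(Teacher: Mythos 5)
Your proof is correct and follows essentially the same route as the paper: the normality computation $[\hat C,\hat C^\ast]=[\hat A,\hat A^\ast]+[\hat B^\ast,\hat B]=\hat I-\hat I=0$ is exactly the paper's argument, and your explicit cancellation $\frac{\partial}{\partial\alpha^\ast}+\alpha-\frac{\partial}{\partial\alpha^\ast}=\alpha$ merely spells out the multiplication-operator identification that the paper states without showing. Your additional remarks on the multiplication-operator viewpoint and the domain of monomials are sound and echo the paper's own follow-up comment that $[\hat C,\hat C^\ast]=0$ just expresses commutativity of multiplication by $\alpha$ and $\alpha^\ast$.
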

This is very easily seen:
\begin{eqnarray}
    [ \hat C , \hat C^\ast ] = [\hat A , \hat A^\ast ]+ [\hat B^\ast , \hat B] = \hat I - \hat I = 0.
\end{eqnarray}

In one sense this is fairly natural. We have $[\hat A , \hat B]=0$ which just says that $ \frac{\partial }{\partial \alpha^\ast}$ and $ \frac{\partial }{\partial \alpha}$ commute. Likewise $[\hat C , \hat C^\ast]=0$ is just saying that multiplication by $\alpha$ and multiplication by $\alpha^\ast$ also commute.

\subsection{Isomorphism}
We have the fairly natural isomorphism $L^2( \mathbb{C}, \mathbb{P}) \cong L^2 (\mathbb{R}\oplus i \, \mathbb{R} , \gamma \times \gamma ) \cong L^2 (\mathbb{R}, \gamma) \otimes L^2 (\mathbb{R} , \gamma) $ and it is instructive to represent the operators above in these terms.

Let us write $\hat a_x, \hat a_x^\ast, \hat q_x, \hat p_x$ for the analogues of the standard canonical operators (\ref{eq:aandadag}),(\ref{eq:qandp}) on the first factor, with $\hat a_y, \hat a_y^\ast, \hat q_y, \hat p_y$ the corresponding operators on the second factor.

We then have
\begin{eqnarray}
    \hat A = \frac{1}{\sqrt{2}} (\hat a_x \otimes \hat I +i \hat I \otimes \hat a_y ) , \quad
    \hat A^\ast = \frac{1}{\sqrt{2}} (\hat a_x ^\ast\otimes \hat I -i \hat I \otimes \hat a_y ^\ast) , \nonumber \\
     \hat B = \frac{1}{\sqrt{2}} (\hat a_x \otimes \hat I -i \hat I \otimes \hat a_y ) , \quad
    \hat B^\ast = \frac{1}{\sqrt{2}} (\hat a_x ^\ast\otimes \hat I +i \hat I \otimes \hat a_y^\ast )
\end{eqnarray}
We may split in quadratures: $\hat X_A = \frac{1}{\sqrt{2}} (\hat A + \hat A^\ast ) , \, \hat Y_A = \frac{1}{i} \frac{1}{\sqrt{2}} (\hat A - \hat A^\ast ) $, etc., in which case
\begin{eqnarray}
    \hat X_A = \frac{1}{\sqrt{2}} (\hat q_x \otimes \hat I - \hat I \otimes \hat p_y ) , \quad
    \hat Y_A = \frac{1}{\sqrt{2}} (\hat p_x \otimes \hat I + \hat I \otimes \hat q_y ) , \nonumber \\
     \hat X_B = \frac{1}{\sqrt{2}} (\hat q_x \otimes \hat I + \hat I \otimes \hat p_y ) , \quad
    \hat Y_B = \frac{1}{\sqrt{2}} (\hat p_x \otimes \hat I - \hat I \otimes \hat q_y ) .
\end{eqnarray}
The two normal operators are given by
\begin{eqnarray}
    \hat C=  \hat q_x \otimes \hat I +i \hat I \otimes \hat q_y  , \quad
    \hat Y_A =  
    \hat C^\ast =  \hat q_x \otimes \hat I -i \hat I \otimes \hat q_y .
\end{eqnarray}

\subsection{Reproducing Kernel Hilbert Spaces}
Let $\mathcal{X}$ be a set and $K : \mathcal{X} \times \mathcal{X}\mapsto \mathbb{C}$ be a (positive definite) kernel, that is, 
\begin{eqnarray}
    \sum_{j,k}c_j^\ast K (x_j , x_k ) c_k \ge 0,
\end{eqnarray}
for each integer $n \ge 1$ and $x_1, \cdots , x_n \in \mathcal{X}$ and $c_1, \cdots , c_n \in \mathbb{C}$. A (not necessarily separable) Hilbert space of functions $\mathfrak{K}$ on $\mathcal{X}$ is called a reproducing kernel Hilbert space (RKHS) for $K$ if it includes the representer functions $\Bbbk_x (\cdot ) = K (\cdot , x)$ and $\langle \Bbbk_x | f \rangle = f (x)$ for all $f \in \mathfrak{K}, x \in \mathcal{X}$. In particular, $\langle \Bbbk_x | \Bbbk_y \rangle = K(x,y)$.

We note that $K(x,y)^\ast = K(y,x)$ with the complex conjugate kernel $K^\ast$ again being a kernel.
Its representers will be $\widetilde{\Bbbk}_x (y) \equiv \Bbbk_y (x)$.

\bigskip

Our starting point is the fact that a kernel on $\mathbb{C}$ is given by
\begin{eqnarray}
    K(\alpha ^\ast, \beta ) = e^{\alpha^\ast \beta}.
\end{eqnarray}
Note that we depart from the standard RKHS notation to emphasize that, in this specific case, the kernel is \textit{anti}-holomorphic function in its first argument. This is in line with our earlier conventions. The corresponding representer is likewise denoted as $\Bbbk_\beta (\alpha^\ast ) = e^{\alpha^\ast \beta }$. The RKHS $\mathfrak{K}$ will be spanned by the representers  and must therefore be a Hilbert space of anti-holomorphic functions. A routine calculation shows that $\int_{\mathbb{C}} e^{\beta^\ast \alpha }e^{\alpha^\ast \gamma} \, \mathbb{P} [ d \alpha ] = e^{\beta^\ast \gamma}$, or
\begin{eqnarray}
    \int_{\mathbb{C}} K(\alpha^\ast , \gamma ) K (\gamma^\ast , \beta ) \, \mathbb{P} [ d \gamma ]
    =K (\alpha ^\ast , \beta ) ,
\label{eq:K_pop}
\end{eqnarray}
and from this we see that the RKHS here is, in fact, $\mathfrak{K}= L^2_{\mathrm{anti-hol.}}( \mathbb{C}, \mathbb{P})$. Indeed, (\ref{eq:K_pop})
gives us $\langle \Bbbk_\alpha  | \Bbbk_\beta \rangle_{\mathfrak{K}} = K( \alpha^\ast  , \beta )$
and that
\begin{eqnarray}
    \langle \Bbbk_\alpha | f \rangle_{\mathfrak{K}} = f ( \alpha^\ast ) , \qquad \big( f \in {\mathfrak{K}} \big) .
\end{eqnarray}

To see how this comes about, let us recall that the exponential vectors are defined by $| \exp (\alpha ) \rangle = e^{\hat a^\ast \alpha} |e_0  \rangle= \sum_n \frac{\alpha^n}{\sqrt{n!}} \,| e_n  \rangle$ and we have $\big( \hat a - \alpha \big) \, | \exp (\alpha ) \rangle =0$. 
One easily sees that $\langle \exp (\alpha ) | \exp (\beta ) \rangle = e^{\alpha^\ast \beta}$.

We define a unitary map $U : \mathcal{H} \mapsto \mathfrak{K} =L^2_{\mathrm{anti-hol.}}( \mathbb{C}, \mathbb{P}) $ by $U: | \psi \rangle \mapsto f_\psi = f_\psi(\alpha ^ \ast )$ where
\begin{eqnarray}
    f_\psi( \alpha ^\ast ) = \langle \exp (\alpha ) | \psi \rangle_{\mathcal{H}}= \sum_n \langle n | \psi \rangle_{\mathcal{H}}  \frac{( \alpha^\ast)^n}{\sqrt{n!}} .
\end{eqnarray}
Unitary follows from the fact that the anti-holomorphic functions $\frac{( \alpha^\ast)^n}{\sqrt{n!}}$, ($n =0, 1,2, \cdots $) form a complete orthonormal basis for $L^2_{\mathrm{anti-hol.}}( \mathbb{C}, \mathbb{P}) $.

The representers are then $ \Bbbk_\beta = U\, | \exp (\beta ) \rangle $ so $  \Bbbk_\beta (\alpha^\ast ) = \sum_n \frac{\beta^n}{\sqrt{n!}} \frac{( \alpha^\ast)^n}{\sqrt{n!}} = e^{\alpha^\ast \beta}$, as required.    
One then sees that $\langle \Bbbk_\beta | f_\psi \rangle_{\mathfrak{K} } = \langle \exp (\beta ) | \psi \rangle_{\mathcal{H}} $ which is $f_\psi (\beta^\ast )$ by definition, and
$  \langle \Bbbk_\beta | \Bbbk_\gamma  \rangle_{\mathfrak{K} } = \langle \exp (\beta ) | \exp (\gamma )  \rangle_{\mathcal{H}} = e^{\beta^\ast \gamma}\equiv K( \beta , \gamma ) $.

The normalized versions of the exponential vectors are known as coherent states. The above construction is known as the Bargmann-Segal representation of $\mathcal{H}$.

The creator $\hat a^\ast$ is represented by $U\hat a^\ast U^{-1}$ and this corresponds to multiplication by $\alpha$, or more broadly the restriction of $\hat A^\ast= \alpha^\ast - \frac{\partial }{\partial \alpha}$ to the anti-holomorphic functions. This follows from the fact that $\langle \exp (\alpha ) | \hat a ^\ast|\psi \rangle= \alpha^\ast \, \langle \exp (\alpha ) |\psi \rangle$. From the fact that $ \hat a^\ast \, | \exp (\alpha ) \rangle \equiv \frac{\partial }{\partial \alpha } | \exp (\alpha ) \rangle$, we find that $U \hat a\, U^{-1}$ corresponds to the operation $\frac{\partial }{\partial \alpha^\ast }$.

\bigskip

We likewise consider the complex conjugate kernel $\widetilde{K} (\alpha , \beta ) = K (\beta , \alpha  )$, and this time the RKHS $\widetilde{\mathfrak{K}}$ is given by the holomorphic functions $L^2_{\mathrm{hol.}} ( \mathbb{C}, \mathbb{P})$.
Here the unitary map $\widetilde{U} : \mathcal{H} \mapsto L^2_{\mathrm{hol.}} ( \mathbb{C}, \mathbb{P}) $ is $\widetilde{U}: | \psi \rangle \mapsto g_\psi = g_\psi (\alpha )$ where $ g_\psi( \alpha  ) = \langle \exp (\alpha^\ast ) | \psi \rangle_{\mathcal{H}}$.
The representers are now $\widetilde{\Bbbk}_\beta = \widetilde{U} \, | \exp (\beta ^\ast ) \rangle $.

\subsection{Generalization to Functions over a Hilbert Space}
It is possible to generalize the Bargmann-Segal construction from Hilbert spaces over the complex numbers $\mathbb{C}$ to those over a separable Hilbert space $\mathfrak{h}$. 

A kernel on $\mathfrak{h}$ is then given by $K(\phi , \psi ) =e^{\langle \phi , \psi \rangle}$. The RKHS will be a straightforward tensor product $\bigotimes^n L^2 (\mathbb{C}, \mathbb{P})$ where $\mathfrak{h} \cong \mathbb{C}^n$, however, we have to deal with a non-separable Hilbert space when $\mathfrak{h}$ is infinite dimensional. The relation $\int_{\mathfrak{h}} K(\phi, \psi ) K(\psi , \phi') \, \mathbb{P}_{\mathfrak{h}} [ d \psi ]= K(\phi , \phi')$ generally only defines a pre-measure in this case $\mathbb{P}_{\mathfrak{h}} $. It is possible to construct a $\sigma$-additive measure over a larger Hilbert space $\mathfrak{h}^>$, see for instance \cite{GK2018}.

In place of the single mode Hilbert space $\mathcal{H}$ we take the (Bose) Fock space $\Gamma (\mathfrak{h}) = \bigoplus_{n=1}^\infty ( \otimes_{\text{symm.}}^n \mathfrak{h})$ with $\mathfrak{h}$ as one-particle space and define the exponential vectors with test function $\phi \in \mathfrak{h}$ by
\begin{eqnarray}
    | \exp (\phi ) \rangle = 1 \oplus \phi \oplus \frac{(\phi \otimes \phi)}{\sqrt{2!}}\oplus \frac{(\phi \otimes\phi \otimes \phi)}{\sqrt{3!}} \cdots .
\end{eqnarray}
It follows that $\langle \exp (\phi ) | \exp (\psi )\rangle = K (\phi , \psi )$.
\section{Anti-Wick Quantization}
\label{sec:anti-Wick}
Let us return to the single mode case ($\mathfrak{h}\cong \mathbb{C}$). It is convenient to introduce the unimodular function
\begin{eqnarray}
    \varpi (\alpha , \beta ) = \frac{K(\alpha^\ast , \beta)}{K(\beta^\ast , \alpha )}
    =\frac{\Bbbk_\beta (\alpha^\ast)}{\Bbbk_\alpha (\beta^\ast)}
    = e^{\alpha^\ast \beta - \beta^\ast \alpha}.
\end{eqnarray}
This allows us to introduce a Fourier transform pair on $L^2 ( \mathbb{C}, \mathbb{P})$ as
\begin{eqnarray}
    \widetilde{f} ( \alpha^\ast , \alpha )
    &=&\int_{\mathbb{C}} f(\beta^\ast , \beta ) \, \varpi (\alpha , \beta ) \, \mathbb{P} [ d \beta ] \nonumber \\
    f ( \alpha^\ast , \alpha )
    &=&\int_{\mathbb{C}} \widetilde{f}(\beta^\ast , \beta ) \, \varpi (\alpha , \beta ) \, \mathbb{P} [ d \beta ].
\end{eqnarray}

Taking $\mathcal{H}$ to again be the Hilbert space on which creation and annihilation operators $\hat a^\ast ,\hat a$ act, we may define a quantization rule $\mathscr{Q}$ replacing $\alpha , \alpha^\ast$ with $\hat a , \hat a^\ast$ as follows:
\begin{eqnarray}
    \mathscr{Q} (f) \triangleq
    \int_{\mathbb{C}} \widetilde{f}(\beta^\ast , \beta ) \, \mathscr{Q} \big( \varpi (\cdot , \beta ) \big) \, \mathbb{P} [ d \beta ]
\end{eqnarray}
where we must give the value of $\mathscr{Q} \big( \varpi (\cdot , \beta ) \big)$ for each $\beta \in \mathbb{C}$. In particular, we have the following choices
\begin{eqnarray}
    \mathscr{W} \big( \varpi (\cdot , \beta ) \big)
    &=&  e^{ \hat a^\ast \beta - \beta^\ast \hat a} , \quad (\mathrm{Weyl}) \nonumber \\
    \mathscr{N} \big( \varpi (\cdot , \beta ) \big)
    &=&  e^{ \hat a^\ast \beta } e^{- \beta^\ast \hat a}, \quad (\mathrm{Wick}) \nonumber \\
    \mathscr{A} \big( \varpi (\cdot , \beta ) \big)
    &=&  e^{  - \beta^\ast \hat a} e^{\hat a^\ast \beta } , \quad (\mathrm{Anti-Wick}) .
\end{eqnarray}

It is easy to see that, for $f(\alpha^\ast , \alpha) = \sum_{j,k\ge0} f_{jk} (\alpha ^\ast )^j \alpha^k$, we obtain $\mathscr{N} (f) = \sum_{j,k\ge0} f_{jk} (\hat a ^\ast )^j \hat a^k$.
For anti-Wick, we have 
\begin{eqnarray}
    \frac{\langle \exp (\alpha ) | \mathscr{A}(f) | \exp (\beta ) \rangle }{\langle \exp (\alpha ) | \exp (\beta ) \rangle} = f (\alpha^\ast , \beta ).
    \label{eq:exp_expect}
\end{eqnarray}

We recall the well-known integral form for the anti-Wick rule for the single oscillator mode (see Theorem 3, chapter 3 of Louisell\cite{Louisell}, or de Gosson \cite{de_Gosson})
\begin{eqnarray}
\mathscr{A}(f) = \int_{\mathbb{C}} f(\alpha^\ast , \alpha ) | \text{exp} (\alpha )\rangle \langle \text{exp} (\alpha )| \, \mathbb{P} [d\alpha ].
\label{eq:anti_Wick_exp}
\end{eqnarray}
Note that $\langle \psi | \mathscr{A}(f) \, \psi \rangle = \int_{\mathbb{C}} f(\alpha^\ast , \alpha ) \, \langle \psi | \text{exp} (\alpha )\rangle |^2 \, \mathbb{P} [d\alpha ]$ so we have the positivity condition $\mathscr{A}(f) \ge 0$ whenever $f \ge 0$.

\bigskip

The relation (\ref{eq:exp_expect}) generalizes to Fock space over infinite-dimensional Hilbert spaces and gives us the de-quantization scheme. However, generalizing (\ref{eq:anti_Wick_exp}) to the infinite dimensional case is challenging and we formalize the anti-Wick quantization scheme in this setting in the next section.

\subsection{Anti-Wick Ordering of Fields}
In the following, we fix a separable Hilbert space $\mathfrak{h}$ with an anti-linear involutive map $J$, so that $\langle J \phi | J \psi \rangle =\langle \psi | \phi \rangle$. 

We consider two copies of the (bosonic) Fock space over $\mathfrak{h}$ which we denote by $\mathfrak{F}_{A}$ and $\mathfrak{F}_{B}$, respectively. The annihilation and creation operators on $\mathfrak{F}_{A}$ are denoted as $A(\cdot )$ and $A^\ast (\cdot )$, and we have the commutation relations $[A(\phi ), A^\ast (\phi ) ] = \langle \phi | \psi \rangle$. The corresponding operators on $\mathfrak{F}_{B}$ are denoted by $B(\cdot )$ and $B^\ast (\cdot )$.

On the tensor product $\mathfrak{F}_{A}\otimes \mathfrak{F}_{B}$ we consider the operators 
\begin{eqnarray}
Z\left( \phi \right) =A\left( \phi \right) \otimes I_{B}+I_{A}\otimes B\left( J \phi \right) ^{\ast }.
\end{eqnarray}
We remark that $Z (\cdot )$ is anti-linear: the creator fields being linear and the annihilation fields anti-linear in their arguments.

\begin{proposition}
The von Neumann algebra $\mathfrak{Z}$ generated by the operators $Z ( \cdot ) $ is commutative.

\begin{proof}
We have $\mathfrak{Z} = \left\{ Z(\phi): \phi \in \mathfrak{h}  \right\} ^{\prime \prime
} $ by the von Neumann bi-commutant Theorem, and we note that this must be a
*-algebra. However, we note the identity 
\begin{eqnarray}
\left[ Z\left( \phi\right) ,Z\left( \psi\right) ^{\ast }\right] =[A\left( \phi\right)
,A\left( \psi \right) ^{\ast }]\otimes I_{B}+I_{A}\otimes [ B( J\phi ),B\left(
J \psi \right) ^{\ast }]= \langle \phi | \psi \rangle \, I_{A}\otimes I_{B}-I_{A}\otimes
\langle J \psi | J \phi \rangle \, I_{B}=0.
\end{eqnarray}
This ensures that the fields $Z (\cdot )$ always
commutes with the adjoint fields $Z^{\ast }(\cdot )$ for all arguments.
\end{proof}
\end{proposition}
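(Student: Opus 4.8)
The plan is to reduce the commutativity of the von Neumann algebra $\mathfrak{Z}$ to the vanishing of the commutators among the generating fields together with their adjoints, and then to promote this formal (algebraic) commutativity to genuine commutativity of the generated algebra. By the bicommutant theorem $\mathfrak{Z} = \{Z(\phi) : \phi \in \mathfrak{h}\}^{\prime\prime}$, and being a commutant it is automatically a weakly closed $*$-algebra; it therefore contains every $Z(\phi)$ together with its adjoint $Z(\phi)^{\ast}$. Since $\mathfrak{Z}$ is generated by these operators, it will be abelian once we know that any two of the generators, and their adjoints, commute. Strictly speaking the $Z(\phi)$ are unbounded normal operators, so $\mathfrak{Z}$ should be read as the von Neumann algebra generated by their bounded functions (spectral projections), and the commutativity we seek is strong commutativity of the corresponding spectral measures.

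First I would record the four relevant commutators. Writing $Z(\psi)^{\ast} = A(\psi)^{\ast}\otimes I_{B} + I_{A}\otimes B(J\psi)$, the like-type commutators
\begin{eqnarray}
[Z(\phi), Z(\psi)] = [A(\phi), A(\psi)]\otimes I_{B} + I_{A}\otimes [B(J\phi)^{\ast}, B(J\psi)^{\ast}] = 0,
\end{eqnarray}
and, by taking adjoints, $[Z(\phi)^{\ast}, Z(\psi)^{\ast}] = 0$, vanish simply because the annihilators (respectively the creators) on each factor already commute among themselves. The substantive identity is the cross commutator computed above: the $A$-sector contributes $+\langle \phi | \psi \rangle$, the $B$-sector contributes $-\langle J\psi | J\phi \rangle$, and the involution property $\langle J\psi | J\phi \rangle = \langle \phi | \psi \rangle$ forces the exact cancellation $[Z(\phi), Z(\psi)^{\ast}] = 0$ (and likewise $[Z(\phi)^{\ast}, Z(\psi)] = 0$ by adjunction). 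In particular each $Z(\phi)$ is normal, and the self-adjoint quadratures $\Phi(\phi) = Z(\phi) + Z(\phi)^{\ast}$ and $\Phi(i\phi) = i\bigl(Z(\phi)^{\ast} - Z(\phi)\bigr)$ mutually commute on a common invariant domain.

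The delicate point, and the step I expect to be the main obstacle, is precisely that the $Z(\phi)$ are unbounded, so commutativity on a dense domain does not by itself yield commutativity of the generated von Neumann algebra. To close this gap I would pass to the bounded Weyl operators $W(\phi) = \exp\bigl\{\overline{Z(\phi)^{\ast} - Z(\phi)}\bigr\}$, which are unitary because $Z(\phi)^{\ast} - Z(\phi)$ is skew-adjoint. Taking the exponential vectors $|\exp(\phi)\rangle$ as a common core of analytic vectors, on which the fields act explicitly and the Baker--Campbell--Hausdorff expansion terminates, the vanishing of all four commutators above shows that the associated symplectic form is identically zero, so that $W(\phi)W(\psi) = W(\psi)W(\phi)$ for all $\phi, \psi \in \mathfrak{h}$. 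By Nelson's analytic-vector theorem the quadratures $\Phi(\phi)$ are then strongly commuting self-adjoint operators, whence their spectral projections, which generate $\mathfrak{Z}$, mutually commute. Therefore $\mathfrak{Z}$ is abelian, as claimed.
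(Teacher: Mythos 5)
Your proof is correct and its core is the same as the paper's: the cancellation in $[Z(\phi),Z(\psi)^{\ast}]$ between the $+\langle\phi|\psi\rangle$ from the $A$-sector and the $-\langle J\psi|J\phi\rangle=-\langle\phi|\psi\rangle$ from the $B$-sector, using the anti-unitarity of $J$. Where you go beyond the paper is in taking the unboundedness of the fields seriously: the paper stops at the formal commutator identity, whereas you correctly note that commutativity on a dense domain does not by itself give commutativity of the generated von Neumann algebra, and you close the gap by passing to the Weyl unitaries and invoking Nelson's theorem on the exponential-vector core to get strong commutativity of the spectral measures. This extra step is not cosmetic — it is exactly what is needed for the statement about $\mathfrak{Z}=\{Z(\phi)\}''$ to be rigorous — so your version is a strengthening of the paper's argument rather than a departure from it.
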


\bigskip

Let us recall that we may introduce quadrature process $Q_{A}\left( \phi\right)
=A\left( \phi \right) +A^{\ast }\left( \phi \right) $ and $P_{A}\left( \phi \right) =%
\frac{1}{i}\left( A\left( \phi \right) -A\left( \phi \right) ^{\ast }\right) $ on
Fock space $\mathfrak{F}_{A}$. The field $Q_{A}(\cdot )$ is self-commuting and,
for the Fock vacuum state $\Omega _{A}$ has the statistics of a Gaussian field:
\begin{eqnarray}
    \langle \Omega_A |e^{iu \, Q_A (\phi )} \, \Omega_A \rangle =
    \langle \Omega_A |e^{iu \, P_A (\phi )} \, \Omega_A \rangle =e^{- u^2 \| \phi \| ^2 /2}.
\end{eqnarray}
process. The fields $Q_A$ and $P_{A}$, however, are non-commuting: 
\begin{eqnarray}
\left[ Q_{A}(\phi ),P_{A}\left( \psi \right) \right] =2i \,\text{Re} \langle \phi | \psi \rangle \, I_{A}.
\end{eqnarray}
The process $Z$ has quadratures 
\begin{eqnarray*}
Q_{Z}(\phi ) &=& Z (\phi )+Z^{\ast }(\phi ) =Q_{A}( \phi )\otimes I_{B}+I_{A}\otimes Q_{B} (J \phi ), \\
P_{Z}(\phi ) &=&\frac{1}{i}\left( Z(\phi )-Z^{\ast }(\phi )\right) =P_{A}(\phi ) \otimes
I_{B}+I_{A}\otimes P_{B}(J \phi ),
\end{eqnarray*}
and by inspection these quadratures do in fact commute: $\left[ Q_{Z}(\phi ) ,P_{Z}(\psi )\right] \equiv 0$.

We now introduce the mapping $\mathcal{E} _{A}$ from $\mathfrak{Z}$ to the von Neumann algebra $
\mathfrak{A}$ generated by the $A( \cdot )$ defined by 
\begin{eqnarray}
\langle u|\,\mathcal{E} _{A}(X)\,|v\rangle _{B}=\langle u \otimes \Omega _{B} |\,X\,|v \otimes \Omega _{A}\rangle ,
\end{eqnarray}
for arbitrary $u,v\in \mathfrak{F}_{B}$.

\begin{remark}
The mapping $\mathcal{E}_A$ is a projection and may alternatively be written in terms of a partial trace as follows $\mathrm{tr}_B \big(I_A \otimes  |\Omega_B \rangle \langle  \Omega_B | \, X \big)$.
We may similarly define the mapping $\mathcal{E}_B$ onto the corresponding algebra $\mathfrak{B}$.
\end{remark}

\begin{proposition}
The restriction of the mapping $\mathcal{E} _{A}$ to $\mathfrak{Z} $ replaces a
function of the commuting fields $Z(\cdot )$ and $Z^{\ast }(\cdot)$ with the
corresponding anti-Wick ordered version of the function in terms of the fields $A(\cdot )$ and $
A^{\ast } (\cdot )$.
\end{proposition}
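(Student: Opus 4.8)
The plan is to verify the identity on a total family of elements of $\mathfrak{Z}$ and then extend by continuity. Because $Z(\cdot)$ and $Z^{\ast}(\cdot)$ commute, the polynomial $\ast$-algebra they generate — which is weakly dense in $\mathfrak{Z}$ — is recovered by differentiating the commuting exponentials $e^{Z^{\ast}(\phi)}e^{Z(\psi)} = e^{Z^{\ast}(\phi)+Z(\psi)}$ in their test functions, so by polarization it suffices to treat these generating exponentials. Since $Z$ carries the annihilator $A$ and $Z^{\ast}$ the creator $A^{\ast}$, the anti-Wick ordering of a monomial $Z^{\ast}(\phi_1)\cdots Z^{\ast}(\phi_n)\,Z(\psi_1)\cdots Z(\psi_m)$ is, exactly as in the single-mode relation (\ref{eq:exp_expect}), the rearrangement moving all $A$'s to the left of all $A^{\ast}$'s; at the level of the generating function this is the anti-normally ordered exponential $e^{A(\psi)}e^{A^{\ast}(\phi)}$. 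So the goal reduces to showing $\mathcal{E}_A\big(e^{Z^{\ast}(\phi)}e^{Z(\psi)}\big) = e^{A(\psi)}e^{A^{\ast}(\phi)}$.

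First I would use the tensor factorization inherited from $[A(\cdot)\otimes I_B,\, I_A\otimes B(\cdot)]=0$, namely $e^{Z^{\ast}(\phi)} = e^{A^{\ast}(\phi)}\otimes e^{B(J\phi)}$ and $e^{Z(\psi)} = e^{A(\psi)}\otimes e^{B^{\ast}(J\psi)}$, to split the product into an $A$-factor $e^{A^{\ast}(\phi)}e^{A(\psi)}$ and a $B$-factor $e^{B(J\phi)}e^{B^{\ast}(J\psi)}$. Applying the partial $B$-vacuum expectation $\mathcal{E}_A$ (the partial-trace form recorded in the Remark) then pulls out the $A$-factor and reduces the $B$-factor to the scalar $\langle\Omega_B|\,e^{B(J\phi)}e^{B^{\ast}(J\psi)}\,|\Omega_B\rangle$. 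I would evaluate this scalar with exponential vectors: $e^{B^{\ast}(J\psi)}|\Omega_B\rangle = |\exp(J\psi)\rangle$ and $\langle\Omega_B|e^{B(J\phi)} = \langle\exp(J\phi)|$, so it equals $\langle\exp(J\phi)|\exp(J\psi)\rangle = e^{\langle J\phi|J\psi\rangle} = e^{\langle\psi|\phi\rangle}$, using the involution property of $J$.

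The decisive step is then a Baker--Campbell--Hausdorff rearrangement of the $A$-factor: as $[A(\psi),A^{\ast}(\phi)]=\langle\psi|\phi\rangle$ is a scalar, $e^{A^{\ast}(\phi)}e^{A(\psi)} = e^{-\langle\psi|\phi\rangle}\,e^{A(\psi)}e^{A^{\ast}(\phi)}$, and the factor $e^{\langle\psi|\phi\rangle}$ produced by the $B$-vacuum cancels it precisely, leaving $e^{A(\psi)}e^{A^{\ast}(\phi)}$ as required. Differentiating in $\phi$ and $\psi$ and reading off the multilinear coefficients would then identify $\mathcal{E}_A$ with anti-Wick ordering on every monomial, and linearity extends this to all polynomial functions of $Z^{\ast}(\cdot), Z(\cdot)$. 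I expect the whole content of the argument to be concentrated in this cancellation, which is the field analogue of the single-mode reordering $\hat a^m(\hat a^{\ast})^n = \sum_j \binom{m}{j}\binom{n}{j}\, j!\,(\hat a^{\ast})^{n-j}\hat a^{m-j}$ implicit in the motivating computation of the Introduction.

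The main obstacle I anticipate is not the algebra but the passage from polynomials to the full von Neumann algebra $\mathfrak{Z}$. The fields are unbounded, so I would work on the dense domain spanned by exponential vectors, argue that the exponential-vector matrix elements determine the operator, and then invoke the totality of the exponential vectors together with the normality of $\mathcal{E}_A$ to pass to the weak closure. A related subtlety is to give a precise meaning to the ``anti-Wick ordered version'' of a general, non-polynomial element of $\mathfrak{Z}$; the cleanest route is to define the anti-Wick map through its exponential-vector symbol as in (\ref{eq:exp_expect}), whereupon the computation above already exhibits the correct symbol $f(\alpha^{\ast},\beta)$ and no separate closure of the ordering map itself is needed.
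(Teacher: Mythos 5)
Your argument is correct, and it reaches the paper's conclusion by a genuinely different route. The paper exploits the freedom to reorder the commuting factors of a monomial $X=\prod_j Z(\phi_j)\prod_k Z(\psi_k)^\ast$ so that the $B$-operators appear in Wick order (all $B^\ast$'s from the $Z$'s to the left, all $B$'s from the $Z^\ast$'s to the right); the $B$-vacuum expectation then annihilates every term carrying a nontrivial $B$-contribution, and what survives is immediately $\prod_j A(\phi_j)\prod_k A(\psi_k)^\ast$ --- no Baker--Campbell--Hausdorff identity and no cancellation are needed. You instead work with the generating exponentials in the opposite order, $e^{Z^\ast(\phi)}e^{Z(\psi)}$, whose $B$-factor $e^{B(J\phi)}e^{B^\ast(J\psi)}$ is \emph{anti}-Wick ordered and therefore contributes the nontrivial scalar $\langle \exp(J\phi)|\exp(J\psi)\rangle=e^{\langle\psi|\phi\rangle}$, while the $A$-factor $e^{A^\ast(\phi)}e^{A(\psi)}$ is Wick ordered and must be converted by the BCH factor $e^{-\langle\psi|\phi\rangle}$; the two exponentials cancel exactly. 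Had you split the same element as $e^{Z(\psi)}e^{Z^\ast(\phi)}$ you would have recovered the paper's argument verbatim in generating-function form, with $B$-overlap equal to $1$ and the $A$-factor already anti-Wick ordered. So your cancellation is really a consistency check that the answer is independent of the chosen ordering of the commuting factors; it is a pleasant identity, and your exponential form connects more directly to the Cohen-multiplier discussion of Section IV, but the paper's choice of ordering makes the computation essentially trivial. Your closing remarks on unboundedness, polarization (note $Z$ is anti-linear in its argument, so differentiate along real parameters), and the normality of $\mathcal{E}_A$ address points the paper itself leaves informal, so they do not constitute a gap relative to the published proof.
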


\begin{proof}
It suffices to establish this for monomials: set $X=Z\left( \phi_{1}\right) \cdots Z\left( \phi_{n}\right)
Z\left( \psi_{1}\right) ^{\ast }\cdots Z\left( \psi_{m}\right) ^{\ast }$ which is
affiliated to $\mathfrak{Z}$. (Note that we
may reorder the terms in the monomial since the terms all commute!) Let us
do this so that the $B$ fields are all in Wick order, viz., 
\begin{eqnarray}
 \prod_{j}\left( A\left( \phi_{j}\right) \otimes
I_{B}+I_{A}\otimes B\left( J\phi_{j}\right) ^{\ast }\right) \prod_{k}\left( A\left( \psi_{k}\right) ^{\ast }\otimes I_{B}+I_{A}\otimes
B\left( J\psi_{k}\right) \right).
\end{eqnarray}
Taking the Fock vacuum expectation with respect to $\mathfrak{F}_{B}$ then
results in all nontrivial Wick-ordered terms in the $B$ processes vanishing identically
leaving 
\begin{eqnarray}
\mathcal{E} _{A}\left( X\right) =\prod_{j}A\left( \phi_{j}\right) \prod_{k}A\left(
\psi_{k}\right) ^{\ast } .
\end{eqnarray}
\end{proof}

Similarly, $\mathcal{E} _{B}\left( X\right) =\prod_{k}B\left( J\psi_{k}\right) \prod_{j}B\left(
J\phi_{j}\right) ^{\ast } $.

\begin{remark}
The restrictions $\mathcal{E}_A$ and $\mathcal{E}_B$ are not projections since $\mathfrak{A} $ and $\mathfrak{B}$ are not subspaces of $\mathfrak{Z}$. Indeed, the restrictions take us from a commutative algebra to a non-commutative one! Likewise, they are not conditional expectations on von Neumann algebras.
\end{remark}

\begin{corollary}
    The anti-Wick quantization is a completely positive map.
\end{corollary}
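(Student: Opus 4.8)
The plan is to exploit the factorization $\mathscr{A} = \mathcal{E}_A \circ \varphi$ established above, in which $\varphi$ is the embedding $f(\alpha^\ast,\alpha) \mapsto f(Z^\ast, Z)$ of the commutative algebra of symbols into the commutative von Neumann algebra $\mathfrak{Z}$, and $\mathcal{E}_A$ is the partial vacuum expectation onto $\mathfrak{A}$. Since the composition of completely positive maps is again completely positive, it suffices to show that each of the two factors is completely positive. Note that ``completely positive'' here refers to $\mathscr{A}$ regarded as a map out of the commutative $C^\ast$-algebra of symbols into the operators on $\mathfrak{F}_A$, so we must check that every amplification to $n\times n$ matrices is positive.

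For $\mathcal{E}_A$ I would use the description from the Remark and write it as a compression by an isometry. Set $V := I_A \otimes |\Omega_B\rangle$, mapping $\mathfrak{F}_A$ into $\mathfrak{F}_A \otimes \mathfrak{F}_B$; since $\langle \Omega_B | \Omega_B \rangle = 1$ we have $V^\ast V = I_A$, so $V$ is an isometry, and the defining relation $\langle u | \mathcal{E}_A(X) | v\rangle = \langle u \otimes \Omega_B | X | v \otimes \Omega_B\rangle$ reads exactly $\mathcal{E}_A(X) = V^\ast X V$. A map of this Stinespring form is manifestly completely positive: for any positive $[X_{ij}]$ in $M_n(\mathfrak{Z})$ one has $[\mathcal{E}_A(X_{ij})] = (I_n \otimes V)^\ast [X_{ij}] (I_n \otimes V) \ge 0$. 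Equivalently, $\mathcal{E}_A$ is the partial trace against the fixed state $|\Omega_B\rangle\langle\Omega_B|$, the prototypical completely positive map.

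For $\varphi$ the key point is that its range lies in the commutative algebra $\mathfrak{Z}$. Because the generators $Z(\phi)$ and their adjoints mutually commute, this family of commuting normal operators admits a joint spectral measure, and $\varphi$ is realized as integration of the symbol against it; hence $\varphi$ is a unital $\ast$-homomorphism from the symbol algebra into $\mathfrak{Z}$. Every $\ast$-homomorphism between $C^\ast$-algebras is completely positive, since its $n$-fold amplification is again a $\ast$-homomorphism and therefore carries positive elements to positive elements. One may alternatively invoke Stinespring's theorem, that any positive map out of a commutative $C^\ast$-algebra is automatically completely positive. Either way $\varphi$ is completely positive, and composing the two factors yields that $\mathscr{A}$ is completely positive.

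The step I expect to require the most care is the rigorous construction of $\varphi$ as a $\ast$-homomorphism, since the fields $Z(\phi)$ are unbounded and a general symbol $f$ need not be bounded, so the functional calculus and the matrix amplifications are only formal at the level of monomials. I would make this precise by first restricting to bounded Borel symbols, where the joint spectral measure of the commuting normal family yields a genuine $\ast$-homomorphism into $\mathfrak{Z}$ and every assertion above is literal, and only afterwards extend to the unbounded symbols of interest through the usual limiting and affiliation arguments. By contrast, the compression step for $\mathcal{E}_A$ is completely elementary and involves no such subtlety.
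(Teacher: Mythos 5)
Your proposal is correct and follows essentially the same route as the paper: factor $\mathscr{A}=\mathcal{E}_A\circ\varphi$, observe that $\varphi$ is a positive map out of a commutative algebra (hence automatically completely positive) and that $\mathcal{E}_A$ is completely positive, and compose. Your write-up merely supplies details the paper leaves implicit -- the explicit Stinespring compression $V^\ast X V$ with $V=I_A\otimes|\Omega_B\rangle$ and the caveat about unbounded fields requiring restriction to bounded symbols first -- both of which strengthen rather than alter the argument.
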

\begin{proof}
    We have that $\varphi$ is a positive isomorphism from a commutative algebra of bounded functions to a commutative sub-algebra of $\mathfrak{A}\otimes \mathfrak{B}$, and is therefore automatically completely. The partial vacuum expectation $\mathcal{E}_A$ from $\mathfrak{A}\otimes \mathfrak{B}$ onto $\mathfrak{A}$ is also completely positive. Therefore, the composition $\mathscr{A}=\mathcal{E}\circ \varphi$ must also be completely positive.
\end{proof}

\section{Discussion and Conclusion}
Let us, for definiteness, take $\mathfrak{h}=L^2 (\mathcal{X},dx)$ where $\mathcal X$ is a measurable space and $dx$ a $\sigma$-finite measure. We consider fields $\alpha (\phi )= \int_{\mathcal{X}} \phi (x)^\ast \alpha _x \, dx$ and $\alpha (\phi )^\ast = \int_{\mathcal{X}} \phi (x) \alpha _x^\ast  \, dx$ where $\alpha_\cdot$ is an essentially bounded complex-valued function on $\mathcal{X}$. (Note that the fields $\alpha (\cdot )$ and $\alpha (\cdot )^\ast$ are commuting.) The Weyl quantization rule is extended from $e^{\alpha^\ast (\phi )- \alpha (\phi )}\mapsto W(\phi)= e^{A(\phi )^\ast - A(\phi )} = e^{-iP(\phi )}$. Here, $W(\cdot )$ gives the representation of the Weyl unitaries on $\Gamma (\mathfrak{h})$. A quantization rule $\mathscr{Q}$ is said to be Cohen class if it is extended from $\mathscr{Q}: e^{\alpha^\ast (\phi )- \alpha (\phi )}\mapsto \Xi (\phi ) \, W(\phi)$ where $\Xi$ is a complex-valued function called a Cohen multiplier.

The anti-Wick quantization extends from $\mathscr{A} :e^{\alpha^\ast (\phi )- \alpha (\phi )}\mapsto e^{-A (\phi )} e^{A(\phi )^\ast }$ and this corresponds to the Cohen multiplier
\begin{eqnarray}
    \Xi _{\text{a.-W.}}(\phi ) = e^{- \frac{1}{2} \langle \phi |\phi \rangle}.
\end{eqnarray}
An important functorial property is that if $\mathfrak{h}=\mathfrak{h}_1\otimes \mathfrak{h}_2$ then $\Gamma (\mathfrak{h} )\cong \Gamma (\mathfrak{h}_1) \otimes \Gamma (\mathfrak{h}_2)$. In the anti-Wick case, we note that the multipliers have the property
\begin{eqnarray}
    \Xi _{\text{a.-W.}}(\phi \oplus \phi_2 ) \equiv \Xi _{\text{a.-W.}}(\phi_1 ) \otimes  
    \Xi  _{\text{a.-W.}}(\phi_2 )  .
    \label{eq:Xi_split}
\end{eqnarray}
Taking $\mathfrak{h}_k=L^2 (\mathcal{X}_k,dx_k)$
and $\mathfrak{h}_1\otimes \mathfrak{h}_2 \cong L^2 (\mathcal{X}_1 \times \mathcal{X}_2,dx_1 \times dx_2)$, then (\ref{eq:Xi_split}) implies the following factor property for anti-Wick
\begin{eqnarray}
    \mathscr{A}( \phi_1 \, \phi_2 ) = \mathscr{A} (\phi_1 ) \otimes \mathscr{A} (\phi_2 )
    \label{eq:anti_Wick_factor}
\end{eqnarray}
where $\phi_1 \, \phi_2 : (x_1, x_2) \mapsto \phi_1(x_1)\phi_2 (x_2)  $.
Note that (\ref{eq:Xi_split}) is a specific property that not all multipliers have: for instance, the well-known Born-Jordan quantization rule does not satisfy this and ``entangles'' the two factors \cite{de_Gosson}.

The factor property (\ref{eq:anti_Wick_factor}) combined with positivity of the anti-Wick scheme sows that it is completely positive in the finite dimensional scheme. Our corollary establishes the infinite dimensional case.

\end{document}